
\documentclass[letterpaper, 10pt, conference]{ieeeconf}
\IEEEoverridecommandlockouts
\overrideIEEEmargins

\usepackage{etex}

\usepackage{graphicx}
\usepackage{balance}
\usepackage{cite}
\usepackage{soul}
\usepackage{epstopdf}
\usepackage{amssymb}
\usepackage[cmex10]{amsmath}
\usepackage{cases}

\usepackage{enumitem}
\interdisplaylinepenalty=2500

\usepackage{booktabs}

\usepackage{array}
\usepackage{multirow}
\usepackage{bigdelim}
\usepackage{mdwtab}
\usepackage{tikz}
\usepackage{physics}
\usepackage{color}

\makeatletter
\let\MYcaption\@makecaption
\makeatother
\usepackage[font=footnotesize]{subcaption}
\makeatletter
\let\@makecaption\MYcaption
\makeatother

\hyphenation{op-tical net-works semi-conduc-tor}

\usepackage{bm}

%
%

\DeclareMathOperator*{\arginf}{arg\,inf}

\usepackage{accents}

\newtheorem{theorem}{Theorem}
\newtheorem{lemma}{Lemma}

\newtheorem{remark}{Remark}


\title{\bfseries Active Trajectory Estimation for Partially Observed Markov Decision Processes via Conditional Entropy}%
\author{Timothy L.\ Molloy and Girish N.\ Nair%
\thanks{The authors are with the Department of Electrical and Electronic Engineering, University of Melbourne, Parkville, VIC, 3010, Australia.
{\texttt{\{tim.molloy,gnair\}@unimelb.edu.au}}}
\thanks{This work received funding from the Australian Government, via grant AUSMURIB000001 associated with ONR MURI grant N00014-19-1-2571.}%
}

\begin{document}




\maketitle
\thispagestyle{empty}
\pagestyle{empty}

\begin{abstract}
\boldmath
In this paper, we consider the problem of controlling a partially observed Markov decision process (POMDP) in order to actively estimate its state trajectory over a fixed horizon with minimal uncertainty.
We pose a novel active smoothing problem in which the objective is to directly minimise the {\em smoother entropy}, that is, the conditional entropy of the (joint) state trajectory distribution of concern in fixed-interval Bayesian smoothing.
Our formulation contrasts with prior active approaches that minimise the sum of conditional entropies of the (marginal) state estimates provided by Bayesian filters.
By establishing a novel form of the smoother entropy in terms of the POMDP belief (or information) state, we show that our active smoothing problem can be reformulated as a (fully observed) Markov decision process with a value function that is concave in the belief state.
The concavity of the value function is of particular importance since it enables the approximate solution of our active smoothing problem using piecewise-linear function approximations in conjunction with standard POMDP solvers.
We illustrate the approximate solution of our active smoothing problem in simulation and compare its performance to alternative approaches based on minimising marginal state estimate uncertainties.
\end{abstract}

%
\IEEEpeerreviewmaketitle

\section{Introduction}
The problem of active state estimation involves controlling a partially observed stochastic dynamical system in order to elicit useful information for estimating its partially observed state \cite{Krishnamurthy2016, Blackmore2008,Hu2004,Baglietto2007,Scardovi2007,Zois2017}.
Active state estimation has been investigated under a variety of names across a range of applications including controlled sensing and sensor scheduling \cite{Nitinawarat2013, Evans2001, Wu2008, Krishnamurthy2007}, dual control \cite{Bar1974,Mesbah2018,Flayac2017}, fault detection \cite{Esna2012,Tzortzis2019,Heirung2019}, target detection and tracking \cite{Chattopadhyay2018, Zois2017, Zois2014}, active learning \cite{Araya2010, Riccardi2005}, uncertainty-aware robot navigation \cite{Thrun2005,Nardi2019}, and active simultaneous localisation and mapping (SLAM) \cite{Mu2016, Roy2005, Valencia2018, Carrillo2012, Thrun2005, Valencia2018}.
The principal challenge in active state estimation lies in finding meaningful estimation performance measures that are tractable to optimise within standard partially observed stochastic optimal control frameworks such as partially observed Markov decision processes (POMDPs).
Most treatments of active state estimation therefore optimise estimation performance measures that directly relate to the performance of Bayesian filters, since Bayesian filters are inherently used to solve POMDPs.
Bayesian filters estimate the current state at each time instant, given all available measurements and controls until then.
However, in applications including target tracking, active SLAM, and uncertainty-aware navigation, state \emph{trajectory} estimates are of greater interest than marginal \emph{instantaneous} state estimates.
For instance, in surveillance applications, it can be important to estimate not just where a target currently is, but from where it came and what points it visited. In SLAM, better estimates of the past trajectory can also help reconstruct a more accurate map of the environment.
Motivated by such applications, in this paper we investigate a novel active state estimation problem with an estimation performance measure directly related to state trajectory uncertainty.

Bayesian (fixed-interval) smoothinng is concerned with inferring the state of a partially observed stochastic dynamical system given an entire trajectory of measurements and controls.
Unlike Bayesian filters, Bayesian smoothers are thus capable of exploiting past, present, and future measurements and controls to compute state estimates (cf.~\cite{Briers2010}).
Bayesian smoothing has been exhaustively studied over many decades, with smoothing algorithms being key components in many state-of-the-art target tracking systems (cf.~\cite{Bar-Shalom2001}) and robot SLAM systems (cf.~\cite{Thrun2005}).
The problem of controlling a system in order to estimate its state trajectory with smoother-like algorithms has received some (limited) recent attention in the context of active SLAM for robotics (cf.~\cite{Mu2016,Carrillo2012,Thrun2005,Valencia2018}).
However, many fundamental challenges remain including formulating meaningful smoother estimation performance measures that are amiable to optimisation with standard POMDP algorithms (which are increasingly able to handle large state and measurement spaces, see \cite{Krishnamurthy2016}).

Popular estimation performance measures proposed previously for active state estimation have included estimation error probabilities \cite{Krishnamurthy2007,Blackmore2008}, mean–squared error \cite{Zois2014,Zois2017,Krishnamurthy2007}, Fisher information \cite{Flayac2017}, and the (Shannon or R\'enyi) entropy of estimates from Bayesian filters \cite{Krishnamurthy2007,Scardovi2007,Thrun2005} (see \cite[Chapter 8]{Krishnamurthy2016} and references therein for more examples).
Active state estimation with these popular estimation performance measures is typically formulated as a POMDP and solved by reformulating it as a (fully observed) Markov decision process (MDP) in terms of a belief (or information) state (cf.~\cite[Chapter 8]{Krishnamurthy2016}).
The belief state corresponds to Bayesian filter estimates, which makes MDP reformulations straightforward in the case of popular estimation performance measures.
In contrast, estimation performance measures that explicitly relate to Bayesian smoothing and that can be expressed as functions of the belief state appear yet to be considered.

A sizable body of literature has dealt with the solution of POMDPs that specifically arise in active state estimation with popular estimation performance measures.
In general, the solution of these POMDPs is intractable, however, in some important cases, theoretical results have provided useful insight into the structure and nature of their solutions (see \cite{Krishnamurthy2007,Krishnamurthy2016,Krishnamurthy2020,Araya2010} and references therein).
This theoretical insight has enabled the construction of arbitrary-error approximate solutions using standard algorithms for solving POMDPs (cf. \cite{Araya2010}) and the construction of myopic policies that bound the optimal policy under certain dominance conditions (cf. \cite[Chapter 14]{Krishnamurthy2016}).
In addition to proposing an active trajectory estimation problem, we shall also seek to establish theoretical results characterising the structure of its solutions with the aim of identifying tractable approximate solutions.

The main contribution of this paper is the proposal of a novel active smoothing problem in which a POMDP is controlled to reduce the uncertainty associated with its state trajectory.
In contrast to prior treatments of active state estimation, we directly minimise the uncertainty of state trajectory estimates provided by (fixed-interval) Bayesian smoothers rather than upper bounds on state trajectory uncertainty based on state estimates from Bayesian filters.
An important secondary contribution of this paper is the reformulation of our active smoothing problem as a fully observed MDP with a concave value function in terms of the standard concept of belief (or information) state for POMDPs.
Our belief-state reformulation and novel concavity result enables the approximate solution of our active smoothing problem using standard POMDP solution methods and piecewise-linear function approximations.
We illustrate the approximate solution of our active smoothing problem in simulations where its performance to standard active state estimation approaches is also examined.

This paper is structured as follows.
In Section \ref{sec:problem} we pose our active smoothing problem.
In Section \ref{sec:active} we construct a belief-state reformulation of our active smoothing problem, present dynamic programming equations and structural results for solving it.
Finally, we illustrate and compare our active smoothing problem with other approaches in Section \ref{sec:results} and present conclusions in Section \ref{sec:conclusion}.

\emph{Notation:} We denote random variables with capital letters such as $X$, and their realisations with lower case letters such as $x$.
We assume all random variables have probability mass functions (or densities when they are continuous), with the probability mass function of $X$ written as $p(x)$, the joint probability mass function of $X$ and $Y$ written as $p(x, y)$, and the conditional probability mass function of $X$ given $Y = y$ written as $p(x|y)$ or $p(x | Y = y)$.
For a function $f$ of $X$, the expectation of $f$ evaluated with $p(x)$ will be denoted $E_X [f(x)]$ and the conditional expectation of $f$ evaluated with $p(x|y)$ as $E[f(x) | y]$.
The {\em point-wise} entropy of $X$ given $Y = y$ will be written $H(X | y) \triangleq - \sum_{x} p(x|y) \log p(x|y)$ with the (average) conditional entropy of $X$ given $Y$ being $H(X | Y) \triangleq E_{Y} \left[ h(X|y) \right]$.
The mutual information between $X$ and $Y$ is $I(X; Y) \triangleq H(X) - H(X | Y) = H(Y) - H(Y | X)$. The point-wise conditional mutual information of $X$ and $Y$ given $Z = z$ is $I(X; Y | z) \triangleq H(X | z) - H(X | Y, z)$ with the (average) conditional mutual information given by $I(X; Y | Z) \triangleq E_{Z} \left[ I(X; Y | z) \right]$.
Where there is no risk of confusion, we will occasionally omit the adjectives ``point-wise'' and ``conditional''.

\section{Problem Formulation and Approach}
\label{sec:problem}
In this section, we pose our active smoothing problem and sketch is solution as a POMDP.

\subsection{Problem Formulation}
Let $X_k$ for $k \geq 0$ be a discrete-time first-order Markov chain with discrete finite state-space $\mathcal{X} \triangleq \{e_1, \ldots, e_N\}$ where $e_i$ is an indicator vector of appropriate dimensions with $1$ in its $i$th component and zeros elsewhere.
We shall denote the initial probability distribution of $X_0$ as $\pi_0 \in \mathbb{R}^N$ with $i$th component $\pi_0(i) \triangleq P(X_0 = e_i)$, and we shall let the (controlled) transition dynamics of $X_k$ be described by the state transition probabilities:
\begin{align}
    \label{eq:stateProcess}
    A^{ij}(u) \triangleq p( X_{k+1} = e_i | X_k = e_j, U_k = u)
\end{align}
with the controls $u$ from the process $U_k$ belonging to a discrete finite set $\mathcal{U}$.
The state process $X_k$ is (partially) observed through a stochastic measurement process $Y_k$ for $k \geq 0$ taking values in some (potentially discrete) metric space $\mathcal{Y}$.
The measurements $Y_k$ are conditionally independent given the states $X_k$, and are distributed according to the measurement kernels:
\begin{align}
    \label{eq:obsProcess}
    B^{i} (Y_k, u) \triangleq p( Y_k | X_k = e_i, U_{k-1} = u)
\end{align}
for $k \geq 1$ with $B^{i}(Y_0) \triangleq p( Y_0 | X_0 = e_i)$. We note that the measurement kernels $B$ will constitute conditional probability density functions when $\mathcal{Y}$ is continuous and conditional probability mass functions when $\mathcal{Y}$ is finite and discrete.
The tuple $\lambda \triangleq (\pi_0, A, B)$ is a controlled hidden Markov model (HMM) \cite{Elliott1995}.

The controlled HMM $\lambda$ constitutes a standard POMDP when the controls $U_k$ are given by a (potentially stochastic) output feedback control policy $\mu$ that solves
\begin{align}
 \label{eq:standardPOMDP}
 \begin{aligned}
   \inf_\mu E_\mu \left[ c_T(x_T) + \sum_{k = 0}^{T-1} c_k \left(x_k, u_k\right) \right]
 \end{aligned}
\end{align}
subject to the state and measurement processes \eqref{eq:stateProcess} and \eqref{eq:obsProcess} for a given horizon $T < \infty$ (cf.\ \cite[Section 7.1]{Krishnamurthy2016}).
Here, the control policy $\mu \triangleq \left\{ \mu_{k} : 0 \leq k < T \right\}$ is defined by conditional probability kernels $\mu_{k}(y^k, u^{k-1}) \triangleq p(u_{k} | y^k, u^{k-1})$ given measurements and controls $y^k \triangleq \{y_0, \ldots, y_k\}$ and $u^{k-1} \triangleq \{u_0, \ldots, u_{k-1}\}$, and the expectation $E_\mu [\cdot]$ is over the joint distribution of the states $X^{T}$ and measurements $Y^{T} \triangleq \{Y_0, \ldots, Y_T\}$ (with the policy $\mu$ then implying a distribution on the controls $U^{T-1}$).
Furthermore, the functions $c_T : \mathcal{X} \mapsto [0, \infty)$ and $c_k : \mathcal{X} \times \mathcal{U} \mapsto [0, \infty)$ are terminal and instantaneous cost functions that encode desired system performance such as reducing control effort or penalising deviations of the state from a desired trajectory.

Standard Bayesian (fixed-interval) smoothing is concerned with estimating the states $X^T \triangleq \{X_0, \ldots, X_T\}$ given measurement and control realisations $y^T$ and $u^{T-1}$.
Whilst the provenance of the controls $u^{T-1}$ is not an explicit concern in standard Bayesian smoothing, in general, the controls have the potential to affect both the state values and the uncertainty associated with them in a phenomenon known as the dual control effect \cite{Bar1974}.
In order to exploit this effect, let us quantifying the estimation performance of Bayesian smoothers using the conditional joint entropy
\begin{align}
    \label{eq:condEntCriteria}
    H(X^T | Y^T, U^{T-1}) = E_{Y^T,U^{T-1}}[H(X^T | y^T, u^{T-1})]
\end{align}
where $H(X^T | y^T, u^{T-1})$ is the point-wise conditional entropy of the (joint) smoother estimate (i.e., the entropy of the joint conditional distribution $p(x^T | y^T, u^{T-1})$ over state realisations $x^T \triangleq \{x_0, \ldots, x_T\}$).
Our active smoothing problem is then to find a policy $\mu$ that minimises the smoother entropy and the costs $c_k$ and $c_T$ by solving
\begin{align}
 \label{eq:activeSmoothing}
 \begin{split}
   \inf_\mu \Bigg\{ &H(X^T | Y^T, U^{T-1}) \\
   &\quad+ E_\mu \left[ c_T(x_T) + \sum_{k = 0}^{T-1} c_k \left(x_k, u_k\right) \right] \Bigg\}
 \end{split}
\end{align}
subject to \eqref{eq:stateProcess} and \eqref{eq:obsProcess}.

Our proposal of the smoother entropy $H(X^T | Y^T, U^{T-1})$ as a measure of estimation performance is primarily motivated by the interpretation of entropy as a measure of uncertainty --- the smaller the smoother entropy, the more concentrated we expect the smoother distribution $p(x^T | y^T, u^{T-1})$ to be at the true (unknown) state sequence realisation $X^T = x^T$.
Our proposal also contrasts with previous approaches that have used the sum of uncertainty in marginal (or instantaneous) estimates of the state $X_k$.
For example, the point-wise conditional entropy $H(X_k | y^k, u^{k-1})$ has frequently been added to the terminal and/or instantaneous costs $c_T$ and $c_k$ in active sensing and robotics (see \cite{Krishnamurthy2007,Krishnamurthy2016,Araya2010,Thrun2005, Mu2016} and references therein for details).
However, this does not directly account for correlations between subsequent states, and thus overestimates the uncertainty in the trajectory.
Indeed, the expected sum of such entropy terms is strictly greater than the smoother entropy since 
\begin{align}
\begin{split}
\label{eq:entropyBounds}
\sum_{k = 0}^T H(X_k | Y^{k}, U^{k-1})
&\geq  H(X^T | Y^T, U^{T-1}),
\end{split}
\end{align}
with equality holding only when the states are (temporally) independent.
Unlike previous approaches, our approach \eqref{eq:activeSmoothing} therefore explicitly encourages exploitation of the temporal dependencies between states, and hence directly aids state estimators that use the entire trajectory of measurements $Y^T$ and controls $U^{T-1}$ such as Bayesian smoothers \cite{Briers2010,Elliott1995} and the Viterbi algorithm \cite{Rabiner1989}.

\subsection{POMDP Solution Approach}
Whilst our active smoothing problem \eqref{eq:activeSmoothing} constitutes a POMDP, its solution in the same manner as standard POMDPs of the form in \eqref{eq:standardPOMDP} is complicated by the smoother entropy $H(X^T | Y^T, U^{T-1})$.
Firstly, the smoother entropy $H(X^T | Y^T, U^{T-1})$ does not constitute a standard terminal or instantaneous cost.
Furthermore, the solution (or approximate solution) of standard POMDPs \eqref{eq:standardPOMDP} involves reformulating them as (fully observed) MDP in terms of a belief (or information) state corresponding to the state estimates $p(X_k | y^k, u^{k-1})$ given by Bayesian filters.
Naive belief-state reformulations of the smoother entropy $H(X^T | Y^T, U^{T-1})$ however lead only to the upper bounds in \eqref{eq:entropyBounds}.
Finally, existing algorithms for solving POMDPs (or finding tractable approximate solutions) require certain structural properties of the cost and value functions, including concavity.

In this paper, we focus on establishing a novel form of the smoother entropy $H(X^T | Y^T, U^{T-1})$. This will let us reformulate our active smoothing problem as an MDP with structural properties amenable to the use of standard POMDP algorithms for finding tractable (approximate) solutions.


\section{Belief-State Reformulation, Structural Results, and Approximate Solution Approach}
\label{sec:active}

In this section, we introduce a belief-state reformulation of our active smoothing problem by establishing a novel form of the smoother entropy $H(X^T | Y^T, U^{T-1})$.
We use this reformulation to derive key structural results and an approximate solution approach.

\subsection{Belief-State Reformulation}

As a first step towards reformulating our active smoothing problem, we shall establish a novel additive form of the smoother entropy $H(X^T | Y^{T-1}, U^{T-1})$ for the HMM $\lambda$.

\begin{lemma}
\label{lemma:stageAdditive}
 The smoother entropy $H(X^T | Y^T, U^{T-1})$ of the controlled HMM $\lambda$ with controls $U_k$ given by some (potentially stochastic) output-feedback policy $\mu = \{\mu_k : 0 \leq k < T\}$ with $\mu_{k}(y^k, u^{k-1}) = p(u_{k} | y^k, u^{k-1})$ has the additive form:
\begin{align}
    \label{eq:stageAdditive}
    \begin{split}
    H(X^T | Y^T, U^{T-1}) 
    &= E_\mu \Bigg[ H(X_T | y^{T}, u^{T-1})\\
    &\;+ \sum_{k = 0}^{T-1} H(X_{k} | X_{k+1}, y^{k}, u^{k}) \Bigg]
    \end{split}
\end{align}
where we define $H(X_0 | Y^{0}, U^{-1}) \triangleq H(X_0 | Y_0)$.
\end{lemma}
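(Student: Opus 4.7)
My plan is to apply the chain rule for conditional entropy in reverse time order to $H(X^T | Y^T, U^{T-1})$ and then collapse each summand using a backward Markov property of the controlled HMM $\lambda$. The starting point is the standard identity $H(X^T | Y^T, U^{T-1}) = H(X_T | Y^T, U^{T-1}) + \sum_{k=0}^{T-1} H(X_k | X_{k+1}, \ldots, X_T, Y^T, U^{T-1})$, obtained by applying the entropy chain rule with the state variables ordered $X_T, X_{T-1}, \ldots, X_0$. The leading term already matches the leading term in \eqref{eq:stageAdditive} after writing it as $E_\mu[H(X_T | y^T, u^{T-1})]$, so the real work is to show that each of the remaining terms reduces to $E_\mu[H(X_k | X_{k+1}, y^k, u^k)]$.

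The key step is the pointwise backward smoothing identity $p(x_k | x_{k+1}, \ldots, x_T, y^T, u^{T-1}) = p(x_k | x_{k+1}, y^k, u^k)$ for every $0 \leq k < T$. I would prove this by factorising the joint $p(x^T, y^T, u^{T-1})$ using the HMM dynamics in \eqref{eq:stateProcess} and \eqref{eq:obsProcess} together with the policy kernels $\mu_j(u_j | y^j, u^{j-1})$, then isolating the factors that actually depend on $x_k$, namely $B^{x_k}(y_k, u_{k-1})$, the transition $A^{x_k x_{k-1}}(u_{k-1})$ when $k \geq 1$, and the outgoing transition $A^{x_{k+1} x_k}(u_k)$. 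Marginalising over $x_0, \ldots, x_{k-1}$ absorbs the first two into $p(x_k, y^k, u^{k-1})$, so that $p(x_k | x_{k+1}, \ldots, x_T, y^T, u^{T-1}) \propto p(x_k, y^k, u^{k-1}) A^{x_{k+1} x_k}(u_k)$. Because the policy kernel $\mu_k(u_k | y^k, u^{k-1})$ is $(y^k, u^{k-1})$-measurable we have $p(x_k | y^k, u^k) = p(x_k | y^k, u^{k-1})$, so the right-hand side depends only on $(x_{k+1}, y^k, u^k)$, giving the claimed identity.

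With the conditional distribution reduced in this way, its pointwise entropy equals $H(X_k | X_{k+1}, y^k, u^k)$, and taking the outer expectation $E_\mu = E_{Y^T, U^{T-1}}$ through the chain-rule sum reassembles the right-hand side of \eqref{eq:stageAdditive}; the degenerate case $T = 0$ falls out of the convention $H(X_0 | Y^0, U^{-1}) \triangleq H(X_0 | Y_0)$ stated in the lemma. The main obstacle I anticipate is simply the bookkeeping around the output-feedback policy, since the controls $U_k$ are themselves random under $\mu$ and are correlated with the entire past of measurements and controls; once the HMM factorisation is written carefully and the measurability of each $U_k$ given $(Y^k, U^{k-1})$ is used, the backward smoothing identity drops out and the rest of the argument is routine.
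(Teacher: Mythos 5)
Your proof is correct and rests on exactly the same ingredients as the argument the paper cites (induction on $T$ following Molloy et al.): the entropy chain rule applied in reverse time order together with the backward conditional-independence property $p(x_k \mid x_{k+1},\ldots,x_T, y^T, u^{T-1}) = p(x_k \mid x_{k+1}, y^k, u^k)$, which you justify correctly from the HMM factorisation and the measurability of $U_k$ with respect to $(Y^k, U^{k-1})$. Organising the telescoping as a single chain-rule sum rather than an induction on the horizon is only a cosmetic difference, so this is essentially the paper's proof.
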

\begin{proof}
Proved via induction on $T$ in a similar manner to \cite[Lemma 1]{Molloy2021}.
\end{proof}

\begin{remark}
A different additive form for $H(X^T | Y^T, U^{T-1})$ was established recently for general nonlinear state-space models in our manuscript \cite{Molloy2021}, which explores the opposite goal of \emph{maximising} the smoother entropy.
In particular, the form in \cite{Molloy2021} involves the subtraction of entropy terms, whilst \eqref{eq:stageAdditive} involves only addition.
We will later see that this difference is important in establishing the structural properties of our active smoothing (minimisation) problem, since addition preserves concavity.
\end{remark}

\begin{remark}
The terms $H(X_k | X_{k+1}, Y^k, U^k)$ that appear after taking the expectation in \eqref{eq:stageAdditive} can be rewritten as the difference $H(X_k | X_{k+1}, Y^k, U^k) = H(X_k | Y^k, U^{k-1}) - I(X_k;X_{k+1}|Y^k, U^k)$ so that \eqref{eq:stageAdditive} becomes,
\begin{align*}
    &H(X^T | Y^T, U^{T-1})\\
    &\quad= \sum_{k = 0}^T H(X_k | Y^{k}, U^{k-1}) - \sum_{k = 0}^{T-1} I(X_k;X_{k+1}|Y^k, U^k).
\end{align*}
In this form, we see that previous approaches that minimise the sum of (marginal) state entropies $H(X_k | Y^{k}, U^{k-1})$ (as described above in \eqref{eq:entropyBounds}) neglect the potential for the smoother entropy $H(X^T | Y^T, U^{T-1})$ to be reduced by increasing the conditional mutual information $I(X_k;X_{k+1}|Y^k, U^k)$ (or dependency) between consecutive states.
\end{remark}

Lemma \ref{lemma:stageAdditive} is of considerable practical value since the entropies $H(X_T | y^{T}, u^{T-1})$ and $H(X_{k} | X_{k+1}, y^{k}, u^{k})$ in \eqref{eq:stageAdditive} have straightforward belief-state reformulations.
Specifically, let us define the belief (or information) state $\pi_{k} \in \Delta^N$ as the distribution of the state $X_k$ given previous measurements and controls, and with $i$th component $\pi_{k}(i) \triangleq p(X_{k} = e_i | y^k, u^{k-1})$.
The belief state belongs to the probability simplex $\Delta^N \triangleq \{x \in \mathbb{R}^n : \sum_{i = 1}^N x(i) = 1, \; 0 \leq x(i) \leq 1 \; \forall i\}$.
Let us also define the joint predicted belief state $\bar{\pi}_{k+1 | k} \in \Delta^{N^2}$ as the joint probability distribution of the states $X_{k+1}$ and $X_{k}$ with $(i,j)$ component $\bar{\pi}_{k+1 | k} (i, j) \triangleq p(X_{k+1} = e_i, X_{k} = e_j | y^{k}, u^{k})$.
The belief state $\pi_{k+1}$ and joint predicted belief state $\bar{\pi}_{k+1 | k}$ are related via the Bayesian filter prediction step,
\begin{align}
    \label{eq:bayesianPred}
    \bar{\pi}_{k+1 | k}(i,j)
    &= A^{ij}(u_{k}) \pi_{k}(j)
\end{align}
for all $1 \leq i,j \leq N$ and $0 \leq k \leq T-1$, and the Bayesian filter update step
\begin{align}
    \label{eq:bayesTemp}
    \pi_{k+1}(i)
    &= \dfrac{ B^i(y_{k+1}, u_{k}) \sum_{j = 1}^N \bar{\pi}_{k+1 | k} (i,j)}{\sum_{\ell = 1}^N\sum_{j = 1}^N B^\ell(y_{k+1},u_{k}) \bar{\pi}_{k+1 | k} (\ell,j)}
\end{align}
for all $1 \leq i \leq N$ and $0 \leq k \leq T-1$, with initial (prior) belief $\pi_0$.
We shall use $\Pi$ to denote the mapping defined by the successive application of the Bayesian filter prediction and update steps \eqref{eq:bayesianPred} and \eqref{eq:bayesTemp} to a belief state $\pi_{k}$ with control and measurement $u_{k}$ and $y_{k+1}$, namely,
\begin{align}
    \label{eq:bayesianFilter}
    \pi_{k+1}
    &= \Pi(\pi_{k}, u_{k}, y_{k+1}).
\end{align}

The entropy $H(X_T | y^{T}, u^{T-1})$ in \eqref{eq:stageAdditive} is the entropy of the terminal belief state $\pi_T$, thus we write
\begin{align}\label{eq:terminal_cost_entropy}
    H(X_T | y^T, u^{T-1})
    \triangleq \tilde{g}_T (\pi_T).
\end{align}
Similarly, given the joint predicted belief state $\bar{\pi}_{k+1 | k}$ and the prediction relationship \eqref{eq:bayesianPred}, we can see that the conditional entropy $H(X_{k} | X_{k+1}, y^{k}, u^{k})$ appearing in \eqref{eq:stageAdditive} can be viewed as a function of $\pi_k$ and $u_k$ in the sense that,
\begin{align}\notag
    &H(X_{k} | X_{k+1}, y^{k}, u^{k})\\\notag
    &= - \sum_{i,j = 1}^N A^{ij}(u_{k}) \pi_{k}(j) \log \dfrac{A^{ij}(u_{k}) \pi_{k}(j)}{ \sum_{\ell = 1}^N A^{i\ell}(u_{k}) \pi_{k}(\ell)}\\\label{eq:running_cost_entropy}
    &\triangleq \tilde{g}_k(\pi_k, u_k).
\end{align}
Given these belief-state expressions and Lemma \ref{lemma:stageAdditive}, our main reformulation result is that the active smoothing problem \eqref{eq:activeSmoothing} can be expressed as an MDP in terms of the belief state $\pi_k$.

\begin{theorem}
\label{theorem:ocp}
Define the functions
\begin{align*}
    g_k(\pi_{k}, u_{k})
    &\triangleq E_{X_{k}} \left[ \tilde{g}_k(\pi_k, u_k) + c_k (x_{k}, u_{k}) | \pi_{k}, u_{k} \right]
\end{align*}
for $0 \leq k < T$ and 
$
    g_T(\pi_{T})
    \triangleq E_{X_{T}} \left[ \tilde{g}_T(\pi_T) + c_T (x_{T}) | \pi_{T} \right].
$
Then, the active smoothing problem \eqref{eq:activeSmoothing} is equivalent to the MDP (or fully observed stochastic optimal control problem):
\begin{align}
\label{eq:ocp}
\begin{aligned}
&\inf & & E_{Y^{T}} \left[ \left. g_T(\pi_T) + \sum_{k = 0}^{T-1} g_k \left( \pi_{k}, u_{k} \right) \right| \pi_0 \right]\\ 
&\mathrm{s.t.} & &  \pi_{k+1} = \Pi\left( \pi_{k}, u_{k}, y_{k+1} \right)\\
& & & Y_{k+1} \sim p(y_{k+1} | \pi_{k}, u_{k})\\
& & & \mathcal{U} \ni U_k \sim \bar{\mu}_k(\pi_k),
\end{aligned}
\end{align}
with the optimisation being over policies $\bar{\mu} \triangleq \{\bar{\mu}_k : 0 \leq k < T\}$ that are functions of $\pi_k$, i.e., $\bar{\mu}_k(\pi_k) \triangleq p(u_{k} | \pi_k)$.
\end{theorem}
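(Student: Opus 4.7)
The plan is to combine Lemma \ref{lemma:stageAdditive} with the belief-state expressions \eqref{eq:terminal_cost_entropy}--\eqref{eq:running_cost_entropy}, apply the tower property of conditional expectation to absorb the auxiliary costs $c_k$ and $c_T$, and finally invoke the standard POMDP sufficient-statistic argument to restrict attention to belief-state policies.

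First I would apply Lemma \ref{lemma:stageAdditive} directly to the smoother-entropy term in \eqref{eq:activeSmoothing}, obtaining
\begin{align*}
H(X^T | Y^T, U^{T-1}) = E_\mu\!\left[\tilde{g}_T(\pi_T) + \sum_{k=0}^{T-1} \tilde{g}_k(\pi_k, u_k)\right],
\end{align*}
where I identify the point-wise terminal entropy with $\tilde{g}_T(\pi_T)$ via \eqref{eq:terminal_cost_entropy} and each stage entropy $H(X_k|X_{k+1},y^k,u^k)$ with $\tilde{g}_k(\pi_k, u_k)$ via \eqref{eq:running_cost_entropy}. The key observation here is that these point-wise entropies depend on the history $(y^k, u^{k-1})$ only through the belief state $\pi_k$, because the joint predicted belief $\bar{\pi}_{k+1|k}$ is a deterministic function of $(\pi_k, u_k)$ through \eqref{eq:bayesianPred}; hence $\tilde{g}_k$ and $\tilde{g}_T$ are genuine deterministic functions of their arguments.

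Next, I would add the expected stage and terminal costs from \eqref{eq:activeSmoothing} inside the same expectation $E_\mu[\cdot]$ and apply the tower property by conditioning on $(\pi_k, u_k)$ at each $k < T$ and on $\pi_T$ at the terminal stage. Since $X_k \,|\, (y^k, u^{k-1}) \sim \pi_k$ by definition of the belief state, this yields
\begin{align*}
E_\mu\!\left[\tilde{g}_k(\pi_k, u_k) + c_k(X_k, u_k)\right] &= E_\mu[g_k(\pi_k, u_k)],\\
E_\mu\!\left[\tilde{g}_T(\pi_T) + c_T(X_T)\right] &= E_\mu[g_T(\pi_T)],
\end{align*}
so that the objective of \eqref{eq:activeSmoothing} collapses to $E_\mu\!\left[g_T(\pi_T) + \sum_{k=0}^{T-1} g_k(\pi_k, u_k)\right]$.

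Finally, I would verify that the belief state evolves as a controlled Markov chain under \eqref{eq:bayesianFilter}, with measurement kernel $p(y_{k+1}|\pi_k, u_k) = \sum_{i,j} B^i(y_{k+1}, u_k)A^{ij}(u_k)\pi_k(j)$ that depends on the past only through $(\pi_k, u_k)$. Since both the one-stage costs $g_k$ and the transition law thereby depend on the history only through the belief state, a standard backward-induction argument for POMDPs (cf.\ \cite[Chapter 7]{Krishnamurthy2016}) shows that the infimum over history-dependent policies $\mu_k(y^k, u^{k-1})$ is attained (or approached) by policies $\bar{\mu}_k(\pi_k)$ depending only on the belief state, giving \eqref{eq:ocp}. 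The main obstacle I would expect is this last restriction to belief-state policies: one must argue rigorously that no optimality is sacrificed by replacing $\mu_k(y^k, u^{k-1})$ with $\bar{\mu}_k(\pi_k)$, which I would handle by inducting backward from $k = T$ to show that the optimal cost-to-go is always a function of $\pi_k$ alone, leveraging the history-independence of the update \eqref{eq:bayesianPred}--\eqref{eq:bayesianFilter} and of $g_k$.
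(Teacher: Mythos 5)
Your proposal is correct and follows essentially the same route as the paper, which proves the theorem by combining Lemma \ref{lemma:stageAdditive} with the belief-state identities \eqref{eq:terminal_cost_entropy}--\eqref{eq:running_cost_entropy} and the standard sufficient-statistic argument for POMDPs. You correctly isolate the one non-routine ingredient --- that $H(X_k|X_{k+1},y^k,u^k)$ depends on the history only through $(\pi_k,u_k)$ via \eqref{eq:bayesianPred} --- and the remaining steps (tower property for $c_k$, $c_T$ and backward induction to restrict to belief-state policies) are handled as the paper intends.
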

\begin{proof}
Proved in a similar manner to \cite[Theorem 1]{Molloy2021} using Lemma \ref{lemma:stageAdditive}.
\end{proof}

\subsection{Dynamic Programming Equations}

Given the belief-state formulation of our active smoothing problem established in Theorem \ref{theorem:ocp}, we may limit our consideration of optimal polices to deterministic policies $\bar{\mu}$ of the belief state $\pi_k$ in the sense that $u_{k} = \bar{\mu}_k(\pi_k)$ since such optimal policies exist for finite-horizon stochastic optimal control problems (cf.~\cite{Bertsekas2005, Krishnamurthy2016}).
The value function of our active smoothing problem is then defined as
\begin{align*}
    J_k(\pi_k)
    \triangleq \inf E_{Y_{k+1}^{T}} \left[ \left. g_T(\pi_T) + \sum_{\ell = k}^{T-1} g_k \left( \pi_{\ell}, u_{\ell} \right) \right| \pi_{k} \right]
\end{align*}
for $0 \leq k < T$ and $J_T(\pi_T) \triangleq g_T(\pi_T)$ where $Y_{k}^{T} \triangleq \{ Y_k, \ldots, Y_T \}$ and the optimisation is subject to the constraints in \eqref{eq:ocp}.
The value function then satisfies the dynamic programming recursions
\begin{align*}
    \begin{split}
    J_k(\pi_k)
    &= \inf_{u_{k} \in \mathcal{U}} \left\{ g_k \left( \pi_k, u_{k} \right) \right. \\
    &\qquad \left. + E_{Y_{k+1}} \left[ J_{k+1}(\Pi(\pi_{k}, u_{k}, y_{k+1})) | \pi_k, u_{k} \right] \right\}
    \end{split}
\end{align*}
for $0 \leq k < T$, and the optimal policy is given by
\begin{align*}
    \begin{split}
    \bar{\mu}_k^*(\pi_k) 
    &= u_{k}^*
    \in \arginf_{u_{k} \in \mathcal{U}} \left\{ g_k \left( \pi_{k}, u_{k} \right) \right. \\
    &\; \qquad \left. + E_{Y_{k+1}} \left[ J_{k+1}(\Pi(\pi_{k}, u_{k}, y_{k+1})) | \pi_k, u_{k} \right] \right\}.
    \end{split}
\end{align*}

The dynamic programming recursions are, in general, intractable.
However, by exploiting Lemma \ref{lemma:stageAdditive} and \eqref{eq:terminal_cost_entropy} and \eqref{eq:running_cost_entropy}, we will next characterise the structure of the cost and value functions.
These structural results enable the use standard POMDP techniques to construct tractable approximate solutions to our active smoothing problem.

\subsection{Structural Results}

Our first structural result establishes the concavity of the instantaneous and terminal cost functions $g_k(\pi_k, u_k)$ and $g_T(\pi_T)$ in our active smoothing problem \eqref{eq:ocp}.

\begin{lemma}
\label{lemma:concave}
For any control $u_k \in \mathcal{U}$, the instantaneous and terminal costs $g_k(\pi_k, u_k)$ and $g_T(\pi_T)$ in \eqref{eq:ocp} are concave and continuous in the belief state $\pi_k$ for $0 \leq k \leq T$.
\end{lemma}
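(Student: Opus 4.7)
The plan is to decompose $g_k$ and $g_T$ into an ``entropy'' part and a linear ``expected-cost'' part, and then establish concavity of the entropy parts via the joint convexity of the perspective function $(a,b)\mapsto a\log(a/b)$.

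First I would observe that, since $\tilde g_k(\pi_k,u_k)$ in \eqref{eq:running_cost_entropy} does not depend on $x_k$ and $\tilde g_T(\pi_T)$ in \eqref{eq:terminal_cost_entropy} does not depend on $x_T$, the conditional expectations in Theorem~\ref{theorem:ocp} act only on the cost terms, yielding
\begin{align*}
g_k(\pi_k,u_k) &= \tilde g_k(\pi_k,u_k) + \sum_{i=1}^{N} c_k(e_i,u_k)\,\pi_k(i),\\
g_T(\pi_T) &= \tilde g_T(\pi_T) + \sum_{i=1}^{N} c_T(e_i)\,\pi_T(i).
\end{align*}
The ``cost'' sums are linear (hence concave and continuous) in the belief state, so it suffices to prove concavity and continuity of $\tilde g_k(\cdot,u_k)$ and $\tilde g_T(\cdot)$ on $\Delta^N$. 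For $\tilde g_T$, the expression \eqref{eq:terminal_cost_entropy} is simply the Shannon entropy $-\sum_i \pi_T(i)\log\pi_T(i)$, whose concavity and continuity on $\Delta^N$ follow from the concavity and continuity of $t\mapsto -t\log t$ on $[0,1]$ under the convention $0\log 0=0$.

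For $\tilde g_k$, the key step is to read each summand of \eqref{eq:running_cost_entropy} as $-a_{ij}\log(a_{ij}/b_i)$, where $a_{ij}\triangleq A^{ij}(u_k)\pi_k(j)$ and $b_i\triangleq \sum_{\ell=1}^{N} A^{i\ell}(u_k)\pi_k(\ell)$ are both affine in $\pi_k$ with $0\le a_{ij}\le b_i$. The function $f(a,b)\triangleq a\log(a/b)$ is the perspective of the convex function $t\mapsto t\log t$ on $[0,\infty)$ and is therefore jointly convex on $\{(a,b):a\ge 0,\,b>0\}$; joint convexity can alternatively be checked directly via a positive-semidefinite Hessian. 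Hence $-f$ is jointly concave, and precomposition with the affine map $\pi_k\mapsto (a_{ij},b_i)$ preserves concavity of each summand. Summing the finitely many concave summands yields concavity of $\tilde g_k(\cdot,u_k)$ in $\pi_k$ for every fixed $u_k\in\mathcal{U}$. Continuity on the closed simplex then follows since each summand extends continuously to boundary points via the convention $0\log(0/b)=0$, and because $a_{ij}=0$ whenever $b_i=0$.

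The main obstacle is spotting the perspective-function structure. The naive decomposition $\tilde g_k = H(X_k,X_{k+1}\mid y^k,u^k) - H(X_{k+1}\mid y^k,u^k)$ writes $\tilde g_k$ as a difference of two $\pi_k$-concave functions, which does not manifestly yield concavity and indeed frustrates a direct approach. Recognising each summand as $-a\log(a/b)$ with $a,b$ affine in $\pi_k$, and invoking the joint convexity of the perspective of $t\log t$ (equivalently, the joint convexity of Kullback--Leibler divergence in its two arguments), is what makes the argument go through cleanly and underpins the later claim that $J_k$ inherits concavity in $\pi_k$.
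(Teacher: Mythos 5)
Your proof is correct and follows essentially the same route as the paper: both split $g_k$ and $g_T$ into a conditional-entropy term plus a term that is linear in the belief, and both rest on the fact that $H(X_k \mid X_{k+1}, y^k, u^k)$ is concave in the joint predicted belief $\bar{\pi}_{k+1|k}$, which is itself a linear function of $\pi_k$. The only difference is that where the paper cites \cite[Appendix A]{Globerson2007} for that concavity, you prove it from scratch via the joint convexity of the perspective $(a,b)\mapsto a\log(a/b)$ of $t\log t$ composed with the affine maps $\pi_k \mapsto (a_{ij}, b_i)$ --- which is precisely the standard argument behind the cited fact --- and you additionally make the boundary continuity on the closed simplex explicit.
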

\begin{proof}
We sketch the proof.
Note first that $g_T$ is the sum of $H(X_T | y^T, u^{T-1})$ (which is concave and continuous in $\pi_T$, cf.\ \cite[Theorem 2.7.3]{Cover2006}) and $E_{X_{T}} \left[ c_T (x_{T}) | \pi_{T} \right]$ (which is linear in $\pi_T$). $g_T$ is thus concave and continuous in $\pi_T$.

Similarly, for any $u_k \in \mathcal{U}$, $g_k$ is the sum of a concave function of $\pi_k$ and a linear function of $\pi_k$ since $H(X_k | X_{k+1}, y^k, u^k)$ is continuous and concave in the joint predicted belief $\bar{\pi}_{k | k-1}$ (cf.\ \cite[Appendix A]{Globerson2007}), which itself is a linear function of $\pi_k$ (cf.\ \eqref{eq:bayesianPred}). $g_k$ is thus concave and continuous in $\pi_T$.
\end{proof}

The concavity of the instantaneous costs $g_k(\pi_k, u_k)$ established in Lemma \ref{lemma:concave} is nontrivial because they involve conditional entropies, which are in general only concave in the joint distribution of their arguments, rather than in the marginal (belief state) distribution we consider (cf.\ \cite[Appendix A]{Globerson2007}).
Our second structural result uses Lemma \ref{lemma:concave} to establish the concavity of our active smoothing problem's value function.

\begin{theorem}
\label{theorem:concave}
 The value function $J_k(\pi_k)$ of our active smoothing problem is concave in $\pi_k$ for $0 \leq k \leq T$.
\end{theorem}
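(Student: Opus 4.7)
The plan is to prove the statement by backward induction on $k$, starting at $k=T$ and working toward $k=0$.

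For the base case $k=T$, the identity $J_T(\pi_T) = g_T(\pi_T)$ combined with Lemma \ref{lemma:concave} gives concavity of $J_T$ immediately. For the inductive step, I will assume $J_{k+1}$ is concave in $\pi_{k+1}$ and define the state-action value
\begin{align*}
Q_k(\pi_k, u_k) \triangleq g_k(\pi_k, u_k) + E_{Y_{k+1}}\left[J_{k+1}(\Pi(\pi_k, u_k, y_{k+1})) \mid \pi_k, u_k\right],
\end{align*}
so that $J_k(\pi_k) = \inf_{u_k \in \mathcal{U}} Q_k(\pi_k, u_k)$. Since $\mathcal{U}$ is finite and the pointwise infimum (equivalently, minimum) of concave functions is concave, it suffices to show $Q_k(\cdot, u_k)$ is concave in $\pi_k$ for each fixed $u_k \in \mathcal{U}$. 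The term $g_k(\pi_k, u_k)$ is handled directly by Lemma \ref{lemma:concave}, leaving the expectation term as the only nontrivial piece.

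The main obstacle is showing that $\pi_k \mapsto E_{Y_{k+1}}[J_{k+1}(\Pi(\pi_k, u_k, y_{k+1})) \mid \pi_k, u_k]$ is concave. The key observation is that, for each fixed $u_k$ and each value $y_{k+1}$, the numerator of the Bayesian update \eqref{eq:bayesTemp}, namely $N^i(\pi_k, y_{k+1}) \triangleq B^i(y_{k+1}, u_k) \sum_j A^{ij}(u_k)\pi_k(j)$, is linear in $\pi_k$, and so is the normaliser $p(y_{k+1} \mid \pi_k, u_k) = \sum_i N^i(\pi_k, y_{k+1})$. The expectation may thus be rewritten as
\begin{align*}
\sum_{y_{k+1}} p(y_{k+1} \mid \pi_k, u_k)\, J_{k+1}\!\left(\frac{N(\pi_k, y_{k+1})}{p(y_{k+1} \mid \pi_k, u_k)}\right)
\end{align*}
(with an integral replacing the sum if $\mathcal{Y}$ is continuous). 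Each summand is precisely the perspective of the concave function $J_{k+1}$ applied to the pair $(N(\pi_k, y_{k+1}),\, p(y_{k+1} \mid \pi_k, u_k))$. Since the perspective of a concave function is jointly concave on the positive orthant, and since both arguments are linear (hence affine) in $\pi_k$, composition yields concavity in $\pi_k$ for each $y_{k+1}$.

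Summing (or integrating) concave functions preserves concavity, so the expectation is concave in $\pi_k$. Adding the concave $g_k(\pi_k, u_k)$ yields concavity of $Q_k(\cdot, u_k)$, and taking the minimum over the finite set $\mathcal{U}$ then establishes concavity of $J_k$, completing the induction. The only subtlety to verify carefully is the perspective-function step on the boundary, where $p(y_{k+1} \mid \pi_k, u_k)$ may vanish; this is handled either by restricting to the relative interior and extending by continuity (using the continuity part of Lemma \ref{lemma:concave} and the fact that zero-probability measurements contribute zero weight), or by noting that $J_{k+1}$ inherits continuity from the same inductive argument so the boundary case is well-defined.
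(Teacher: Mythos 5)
Your proof is correct and follows the same route as the paper, which simply invokes the standard POMDP result (\cite[Theorem 8.4.1]{Krishnamurthy2016}) whose proof is exactly your backward induction: concavity of $g_k$ and $g_T$ from Lemma \ref{lemma:concave}, the perspective-function argument for the expected next-stage value under the Bayes update, and closure of concavity under sums and finite minima. You have merely written out in full what the paper delegates to the citation, including the appropriate care at the boundary where $p(y_{k+1}\mid\pi_k,u_k)$ vanishes.
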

\begin{proof}
 From \cite[Theorem 8.4.1]{Krishnamurthy2016} via Lemma \ref{lemma:concave}.
\end{proof}

The value function $J_k$ of standard POMDPs of the form in \eqref{eq:standardPOMDP} is concave in the belief state $\pi_k$.
This concavity property is fundamental to the operation of classical and modern algorithms for solving standard POMDPs, since the expectation, sum, and inf operators used in them preserve concavity \cite{Smallwood1973, Araya2010, Krishnamurthy2016}.
The importance of Lemma \ref{lemma:concave} and Theorem \ref{theorem:concave} is thus that our active smoothing problem preserves the important concavity properties of standard POMDPs.
As we shall discuss next, Lemma \ref{lemma:concave} in particular opens the possibility of using existing algorithms to find approximate solutions to our active smoothing problem.

\subsection{Approximate Solution Approach}
\label{subsec:approx}
POMDPs with instantaneous and terminal belief-state cost functions $g_k$ and $g_T$ that are piecewise linear (as in the case of standard POMDPs \eqref{eq:standardPOMDP} with only $c_k$ and $c_T$) have value functions that admit finite dimensional representations when the measurement space $\mathcal{Y}$ is discrete or discretised (cf.~\cite{Araya2010} and \cite[Chapter 8]{Krishnamurthy2016}).
These finite dimensional representations imply that the value function $J_k$ can be written in terms of a finite set of belief vectors $\Gamma_k \subset \Delta^N$, namely,
\begin{align}
    \label{eq:vectors}
    J_k (\pi_k) = \min_{\alpha \in \Gamma_k} \left< \pi_k, \alpha \right>
\end{align}
where $\left< \cdot, \cdot \right>$ denotes the inner product.
The finite dimensional representations of the value function also enable algorithms for solving standard POMDPs to efficiently operate on the sets of vectors $\Gamma_k$ (see \cite{Araya2010, Krishnamurthy2016} for details).

Due to the costs $g_k$ and $g_T$ in our active smoothing problem being nonlinear in the belief state, our active smoothing problem lacks a value function that admits an exact finite dimensional representation (even in the case of a discrete measurement space).
However, we are still able to find an approximate solution using existing POMDP algorithms by following the piecewise linear approximation approach proposed in \cite[Section 4]{Araya2010}.
Specifically, consider a finite set $\Xi \subset \Delta^N$ of (arbitrarily chosen) \emph{base points} $\xi$ from the belief simplex $\Delta^N$.
For each control $u \in \mathcal{U}$, let us define the tangent hyperplanes to $g_k(\cdot, u)$ at each $\xi \in \Xi$ as
\begin{align*}
    \omega_\xi^u (\pi)
    \triangleq g_k(\xi, u) + \left< (\pi - \xi), \nabla_\xi g_k(\xi, u) \right>
    = \left< \pi, \alpha_\xi^u \right>
\end{align*}
for any $\pi \in \Delta^N$.
Here, $\nabla_\xi g_k(\xi, u)$ denotes the gradient vector of $g_k(\xi, u)$ with respect to its first argument and $\alpha_\xi^u \in \mathbb{R}^N$ are the vectors $\alpha_\xi^u \triangleq g_k(\xi, u) + \nabla_\xi g_k(\xi, u) - \left< \xi, \nabla_\xi g_k(\xi, u) \right>$.
Since from Lemma \ref{lemma:concave} we have that the costs $g_k$ are concave for each control, they are upper bounded by the tangent hyperplanes $\omega_\xi^u$.
The hyperplanes $\omega_\xi^u$ thus form a piecewise linear (upper bound) approximation to $g_k$, i.e.,
\begin{align}
    \label{eq:pwl}
    g_k(\pi, u)
    \leq \min_{\xi \in \Xi} \omega_\xi^u (\pi) = \min_{\xi \in \Xi} \left< \pi, \alpha_\xi^u \right>.
\end{align}
A piecewise linear (upper bound) approximation to $g_T$ can be constructed in an analogous manner.

Given the piecewise linear approximations of the cost functions $g_k$ and $g_T$ in our active smoothing problem, we can employ existing algorithms for solving POMDPs as described in \cite[Section 3.3]{Araya2010} and \cite[Chapter 7]{Krishnamurthy2016}.
The resulting approximate value function will have a finite dimension representation.
The error between the true and approximate value functions can also be made arbitrarily small by increasing the density of base points and can be bounded (see \cite[Section 4]{Araya2010} for a bound in the infinite-horizon case that can be adapted to a finite-horizon bound by virtue of $T$ being finite).

\section{Illustrative Example}
\label{sec:results}
In this section, we illustrate our active smoothing problem and compare it with alternative active estimation approaches. 

\subsection{Example Set-up}

\begin{figure}[t!]
    \centering
    \includegraphics[width=0.7\columnwidth]{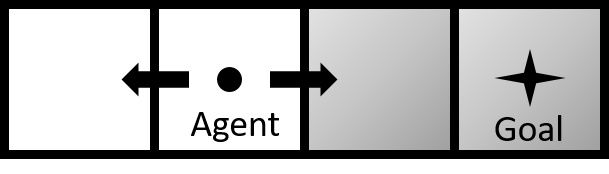}
    \caption{Diagram of the agent considered in our illustrative example that can move west, east, or stay stationary. The agent aims to get close to the goal whilst estimating its trajectory (i.e., path). Measurements are received with different probabilities in the shaded cells versus the unshaded cells.}
    \label{fig:agent}
\end{figure}

We consider an example inspired by uncertainty-aware navigation.
Consider an agent moving in a grid with 4 cells as illustrated in Fig.~\ref{fig:agent}.
Each cell constitutes a state in the agent's state space $\mathcal{X} = \{e_1, \ldots, e_4\}$ (enumerated left to right or west to east).
The agent has three control inputs $\mathcal{U} = \{0,1,2\}$, corresponding to transitioning to the neighbouring cell to the west with probability $0.8$ or staying put with probability $0.2$; staying put with probability $1$; and, transitioning to the adjacent eastern cell with probability $0.8$ or staying put with probability $0.2$, respectively. If a transition would take the agent out of the grid then it remains stationary.
The agent receives two possible measurements $\mathcal{Y} = \{0,1\}$. The agent receives measurement $y = 0$ with probability $0.8$ and measurement $y = 1$ with probability $0.2$ when it is in the two west-most cells, and \emph{vice versa} when it is in the two east-most cells.
Initially, the agent is placed (uniformly) randomly in one of the cells, and over a time horizon of $T = 3$, seeks to move so that it finishes close to the east-most cell with knowledge of the path it took.
We model this situation by considering our active smoothing problem \eqref{eq:activeSmoothing} with $c_k(x_k, u_k) = 0$ and $c_T(e_4) = 1$ but $c_T(e_i) = 0$ for all other $e_i \in \mathcal{X}$.

For the purpose of simulations, we solved our active smoothing problem using the approach detailed in Section \ref{subsec:approx} with a standard POMDP solver\footnote{https://www.pomdp.org/code/} that implements the incremental pruning algorithm.
We selected the base points $\Xi$ in our piecewise linear approximation by constructing a grid with between $1$ and $5$ points linearly spaced in each dimension of the belief from $0$ to $1$. The cost of our active smoothing policy approximated with different numbers of base points per dimension is shown in Fig.~\ref{fig:pwl_study}.
We note that the cost ceases to decrease much after $4$ base points per dimension (we use $5$ in our subsequent results).

For the purpose of comparison, we also found a \emph{Minimum Total Belief Entropy} policy using our approximate solution approach with the same number of base points but with the smoother entropy replaced by the sum of the entropy of each belief state $\pi_k$ over the horizon (i.e., the sum on the left hand side of \eqref{eq:entropyBounds}).
This \emph{Minimum Total Belief Entropy} policy corresponds to previous active state estimation approaches that minimise the entropy of Bayesian filter estimates.
In this example, the complexity of computing (and evaluating) our active smoothing policy is less than that associated with the \emph{Minimum Total Belief Entropy} policy as evidenced by the cardinality of the sets $\Gamma_k$ representing the policies (cf.\ \eqref{eq:vectors}).
Specifically, $|\Gamma_0| = 158$, $|\Gamma_1| = 93$, and $|\Gamma_2| = 46$ for our active smoothing policy compared to $|\Gamma_0| = 438$, $|\Gamma_1| = 224$, and $|\Gamma_2| = 46$ for the \emph{Minimum Total Belief Entropy} policy.

\begin{figure}[t!]
    \centering
    \includegraphics[width = \columnwidth]{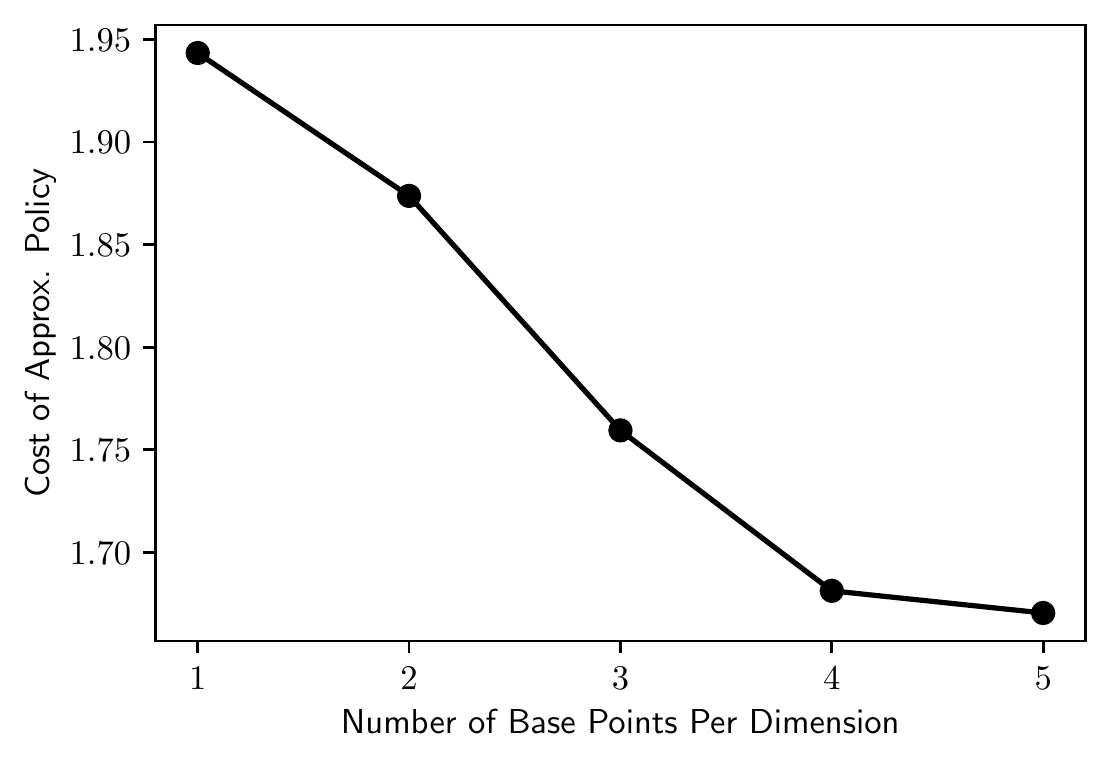}
    \caption{Active smoothing cost \eqref{eq:activeSmoothing} of our approximate active smoothing policy for different numbers of base points (per dimension of the belief state) used in the piecewise linear approximation.}
    \label{fig:pwl_study}
\end{figure}


\begin{table*}[t!]
\begin{center}
\caption{Summary performance measures computed via 10000 Monte Carlo simulations of each policy.}
\label{tbl:illustrativeExample}
\begin{tabular}{@{}lcccc@{}}
\toprule
\multicolumn{1}{c}{\multirow{2}{*}{\textbf{Policy}}} & \textbf{Terminal Cost} & \textbf{Total Belief Entropy} & \textbf{Smoother Entropy} & \multirow{2}{*}{\textbf{\begin{tabular}[c]{@{}c@{}}Total Cost\\ \eqref{eq:activeSmoothing} \end{tabular}}} \\
\multicolumn{1}{c}{}            &                     $E[c_T(x_T)]$ & $\sum_{k = 0}^T H(X_k|Y^k, U^{k-1})$   & $H(X^T | Y^T, U^{T-1})$   &                                                                                \\ \cmidrule(rl){1-5} 
\textbf{Proposed Active Smoothing}                      & 0.5227 & 2.6895                                   & \textbf{1.1518}           & \textbf{1.6745}                                                                \\
\textbf{Minimum Total Belief Entropy}                     & 0.5025 & \textbf{1.9641}                         & 1.5428                    & 2.0453                                                                         \\
\textbf{Always East}                                 & \textbf{0.1495} & 2.3148                                 & 1.7948                    & 1.9443                                                                         \\ \bottomrule
\end{tabular}
\end{center}
\end{table*}

\subsection{Simulation Results}
We performed $10000$ Monte Carlo simulations each for three policies: our active smoothing policy; the \emph{Minimum Total Belief Entropy} policy; and an \emph{Always East} policy that seeks only to reach the goal without estimating the path taken by always selecting the action to move east (i.e., the solution to \eqref{eq:standardPOMDP}, or equivalently, \eqref{eq:activeSmoothing} without the smoother entropy).
Table \ref{tbl:illustrativeExample} summarises the (average) terminal cost $c_T$, total belief entropy, smoother entropies, and total active smoothing cost \eqref{eq:activeSmoothing} computed from the simulations for each of the three policies.
Representative realisations of the controls selected using each of the policies are also shown in Fig.~\ref{fig:actions} (with the agent starting in the second state).

From Table \ref{tbl:illustrativeExample}, we see that unsurprisingly the \emph{Always East} policy results in the lowest terminal cost $c_T$ since it always seeks to move towards the goal.
The \emph{Minimum Total Belief Entropy} policy in contrast incurs a larger terminal cost $c_T$ but reduces the total entropy of the Bayesian filter estimates, but does not also minimise the smoother entropy or the total cost.
Finally, our active smoothing policy minimises the smoother entropy and the total cost, but has a larger terminal cost than the other two policies.

Our active smoothing policy offers different smoother entropy performance to the \emph{Minimum Total Belief Entropy} policy in this example since it selects actions that seek to increase the correlation between successive states, as discussed in Remark 2. 
As shown in the example control realisation in Fig.~\ref{fig:actions}, this means that our active smoothing approach more often elects to remain stationary, so as to receive measurements without changing the state.
In contrast, the \emph{Minimum Total Belief Entropy} policy moves more often since it does not directly exploit the correlation between successive states. This yields poorer state trajectory estimates.

\begin{figure}[t!]
    \centering
    \includegraphics[width = \columnwidth]{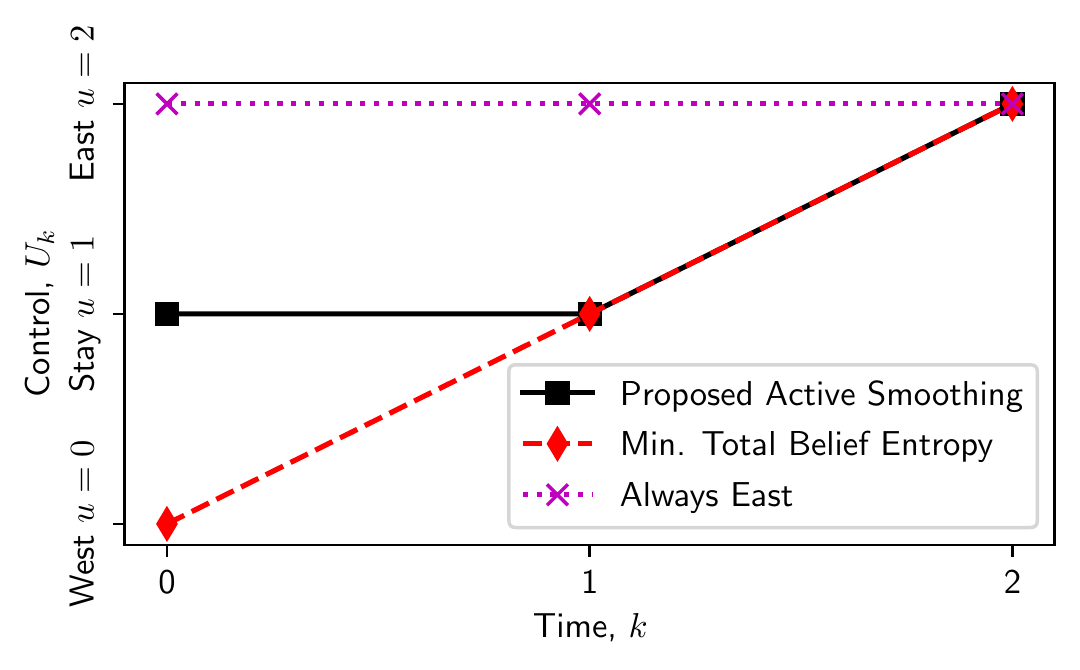}
    \caption{Example realisation of controls from each of the policies considered in our illustrative example when the agent starts in the second state.}
    \label{fig:actions}
\end{figure}

\section{Conclusion}
\label{sec:conclusion}

Active state estimation with a novel estimation performance measure directly relevant to Bayesian smoothing was proposed and investigated.
In contrast to previous active state estimation approaches, we used the joint conditional entropy of state trajectory estimates from Bayesian smoothers as a measure of estimation performance, avoiding the naive use of crude upper bounds based on summing the marginal conditional entropies of the states.
We established a novel form of the smoother entropy to show that our active smoothing problem can be reformulated as a (fully observed) Markov decision process with a value function that is concave in the belief state, enabling its (approximate) solution via piecewise-linear function approximations in conjunction with standard POMDP solvers. 
Finally, we have seen through our simulation example that control policies solving our active smoothing problem lead to improved smoother estimates compared to existing approaches based on minimising the entropy of state estimates provided by Bayesian filters.
Future work will investigate more detailed comparisons to existing approaches, and the practical application of our approach to active sensing and problems in robotics.

\bibliographystyle{IEEEtran}
\bibliography{IEEEabrv,Library}

\end{document}